\documentclass[12pt]{article}
\usepackage{makeidx}  
\usepackage{amsmath}
\usepackage[left=2.5cm,right=2.5cm,top=2.5cm,bottom=2.5cm]{geometry}
\usepackage{authblk}
\usepackage{amssymb}
\usepackage{natbib}

\DeclareMathOperator*{\Int}{\text{ \normalfont  Int}}
\usepackage{amsthm}
\newtheorem*{remark}{Remark}
\newtheorem{Proposition}{Proposition}

\usepackage{comment}
\usepackage{graphicx}
\usepackage{graphbox}
\usepackage{rotating}
\usepackage{multirow}
\usepackage{hyperref}

\newcommand{\norm}[1]{\left\lVert#1\right\rVert}

\begin{document}
%
\date{}
\title{A Functional Data Framework For Analyzing Shapes and Textures in Images}
\author{Issam-Ali Moindjié}
\affil{\small LAMPS, Université de Perpignan Via Domitia, France\\
issam-ali.moindjie@univ-perp.fr}
\maketitle     
\begin{abstract}
    Images represent objects characterized by contours and textures. From a statistical perspective, these features can be defined as observations of continuous random functions. However, most existing approaches rely on pixel‑based discretizations, which lead to high‑dimensional representations and heavy computational costs. In this note, we introduce an alternative, more frugal representation. This representation assumes that the object has a \textit{star-shaped domain} interior. Under this condition, we explore the analysis of images from a functional data analysis perspective. The proposed framework is illustrated on a real data supervised image classification problem.
\end{abstract}
\section{Introduction}
With the advances of data acquisition, images have been extensively collected in many areas with various purposes. In medicine, they provide useful guidance for diagnosis-making (see e.g \cite{kuhl2007}, \cite{goyal2020}), and in agronomy sciences, images can be used to evaluate fruits' ripeness for food quality control (\cite{rizzo}). Although images are increasingly available, statistical analysis of images is still challenging. Particularly under the pixel-based representation which renders classical statistical methods unsuitable as images are seen as high-dimensional matrices with complex correlation structures. 
\par 
Therefore, new methods have been proposed to address the challenges of image analysis. The most popular one is the convolutional neural networks method \citep{LeCun1989}, which has achieved remarkable performance in supervised image classification problems. However, it suffers from some well-known drawbacks: it is  uninterpretable \citep{zhang2018} and its estimation is computationally demanding \citep{deep-comp}. These problems might worsen with the advances in acquisition technologies, as they will give more detailed images, which, under the pixel-based representation, will lead to more complex matrices to analyze. Consequently, the accessibility to CNN and other deep learning models will be progressively limited. In this context, there is a need for the development of frugal statistical methods for analyzing images. \par 
This note aims to contribute to this effort by introducing a new framework for image analysis. It relies on statistical shape analysis  \citep{dryden2016} and functional data analysis \citep{ramsay2005} to account for the continuous nature of the main aspects of object represented in images:  its contour and its interior color variation (texture). \par 
The contour of the object is seen as the \textit{shape} of the objects (in the sense of \cite{kendall}) plus the action of deformations variables such as scaling, rotation, etc. Formally, the contour is a closed planar curve $\mathbf{C}: [0, 1]\to \mathbb{R}^2$, where $t\in [0, 1]$ represents the proportion of the curve that has been traveled from the start ($t=0$) to the end ($t=1$). As, $\mathbf{C}$ is closed, $\mathbf{C}(0)= \mathbf{C}(1)$, with $\mathbf{C}(0)$ and $\mathbf{C}(1)$ are respectively the coordinates at beginning and the end of the curve. Statistical analysis of contours is related to the shape analysis and has been subject of several works. However, historically, the contributions have been focused on annotated landmarks, i.e. sampled finite contours, where the correspondence between all observational points is known for each curve (see e.g. \cite{dryden2016}). This hypothesis is limited in image analysis, since annotating landmarks is highly subjective and time-consuming. Modern works have introduced the random planar curves setting \citep{srivastava2016}, but rely on discrete observations for the estimation process. In this note, we follow the recent works of \cite{axe1}, which presents a functional data analysis approach for studying the contours.\par
Previous works have considered texture only on subsampled contours. In \cite{cootes}, the authors propose to integrate colors for a joint analysis of shapes and appearances (color variations inside and outside the objects). However, this approach is dependent to have a very precise set of landmarks for each image, which is of course non-translatable in most situations. \par 
This note proposes an extension of the work of \cite{axe1}, by incorporating textures into the analysis for a general class of contours: objects with star-shaped domains. This property makes it possible to study color variation inside the objects, and it is held for several objects encountered in image analysis (e.g. biological cells, leaves, organs, etc.). We say that the closure of the interior of a represented object in image ($\mathcal{D}\subset \mathbb{R}^2$) is a star-shaped domain at the point $\mathbf{c}\in \mathbb{R}^2$, if 
\begin{equation} u \in \mathcal{D} \implies 
    \left\{ (1-\lambda)\mathbf{c} + \lambda u\mid \lambda \in\left[0,1\right]\right\} \subset \mathcal{D}.
    \label{star}
\end{equation}
Figure~\ref{conv} illustrates this idea. The bottle shown in Figure~\ref{conv}(a) has a star-shaped interior, yet it is clearly non-convex. This shows that the star-shaped propriety is a weaker condition than convexity, as also expressed in~\eqref{star}. Indeed, every convex set is star-shaped with respect to \emph{any} point in the set, not just a particular center~$\mathbf{c}$. Consequently, convex sets form a subclass of star-shaped domains, whereas the converse does not generally hold.
\par 
\begin{figure}[h]
    \centering
    \begin{tabular}{c c c c c c c  }
    (a) 
        &\includegraphics[scale=0.2, align=c]{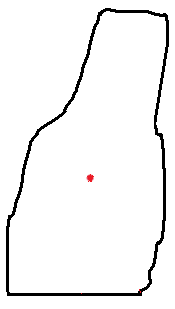}    & 
        (b) &        
          \includegraphics[scale=0.2, align=c]{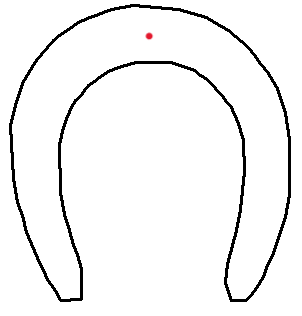} &  \\   (c) &\includegraphics[scale=0.2, align=c]{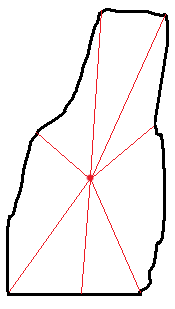} & (d) & \includegraphics[scale=0.2, align=c]{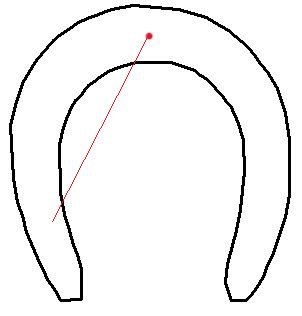}
    \end{tabular}
    \caption{Examples of two objects, bottle with a star-shaped domain interior (a\& c), and horseshoe with a non-star-shaped domain interior(b\& d); The center $\mathbf{c}$ (see Section \ref{conts} for details) for each figure is represented in red (top row figures: a, b). The bottle has a star-shaped domain, contrary to the horseshoe, as (d) shows: there exists lines which fall outside its interior. Images are drawn using the curves provided in the MPEG-7 datasets (\url{https://dabi.temple.edu/external/shape/MPEG7/dataset.html}) }
    \label{conv}
\end{figure}
In this work, we demonstrate the appealing proprieties of star-shaped propriety by explicitly studying the underlying mapping. Then, we introduce a joint functional data analysis of contours and textures. 
\par The manuscript proceeds as follows. After recalling main concepts in Section~\ref{back}, we study the new representation of colors in images in Section~\ref{text-sec} and Section~\ref{stat}. Section~\ref{app} presents an application to real data for supervised image classification using this framework, and Section \ref{disc} concludes this note, by giving perspectives of future works.
\section{Background}
\label{back}
In this work, we study the image $\mathbf{I}$, which we assume is a random surface taking values in $\mathcal{I}=L_2([0, 1]^2, \mathbb{R})$: the set of squared integrable functions defined from $[0,1]^2$ to $\mathbb{R}$.  
The space $\mathcal{I}$ is a Hilbert space endowed with the classical inner product: 
$$
 \langle \boldsymbol{f}, \boldsymbol{g} \rangle_{\mathcal{I}}= \int_{0}^1  \int_{0}^1\boldsymbol{f}\begin{pmatrix}
     x\\
     y
 \end{pmatrix}\boldsymbol{g}\begin{pmatrix}
     x\\
     y
 \end{pmatrix}dxdy,\ 
 $$
for $\boldsymbol{f}, \boldsymbol{g} \in \mathcal{I}$. For the sake of notation, the definition of $\mathcal{I}$ assumes that $\mathbf{I}$ is a real random surface, i.e. the image $\mathbf{I}$ is in grayscale. However, the generalization of our work to Red–Green–Blue (RGB) image is direct, as the application part shows. 

Although $\mathbf{I}$ is a random surface, technological limitations prevent its continuous observation. In practice, each image $\mathbf{I}_i$ (for $i = 1,\ldots,n$) is recorded on a discrete grid that reflects its resolution. Since the resolution may vary from one image to another, each $\mathbf{I}_i$ can be observed on its own grid $G_i \subset [0,1]^2$:
$$
\mathbf{I}_i(t_{i,j}), \qquad t_{i,j} \in G_i.
$$
The points $t_{i,1}, \ldots, t_{i,|G_i|}$ correspond to the pixels, with $|G_i|$ denoting the number of pixels in the grid. In this pixel-based representation, models are built from the collection of observations $\{\mathbf{I}_i(t_{i,j}),\ t_{i,j}\in G_i\}_{i=1,\ldots,n}$. However, when the grids are large, the computational cost becomes substantial for estimating models.

This work tackles this classical representation by explicitly considering the main aspect of represented objects in the images: shapes, and the textures. The next section provides a central step for obtaining these variables.

\subsection{Image segmentation}
The image $\mathbf{I}$ represents a scene, potentially with several objects. Our approach proposes to explicitly include these objects in the analysis. Therefore, a natural first step is to identify these objects. This is the segmentation process- an active research area of computer vision (see e.g. \cite{fu1981survey}, \cite{zaitoun2015}, \cite{otsu}). \par This note focuses on the case where the image $\mathbf{I}$ contains a single object. The objective is then to obtain $\mathbf{I}'$, the silhouette of this object:
$$
\mathbf{I}'(u) = \left\{\begin{array}{c c}
1& u \in \mathcal{D} \\
0& \text{otherwise}
\end{array} \right.
$$ 
with $u\in [0, 1]^2$, and $\mathcal{D}\subset [0, 1]^2$ represents the closure of the interior of the object, i.e. frontier plus the interior. \par 
In our application examples, we use a classical procedure, originally proposed in \cite{otsu}. It optimally selects a threshold to segment the image based on the pixels' intensity values. Figure~\ref{segm} presents the obtained results on a random example.  
\begin{figure}[ht]
    \centering
    \begin{tabular}{c c}
    (a) & (b) \\
          \includegraphics[width=0.25\linewidth]{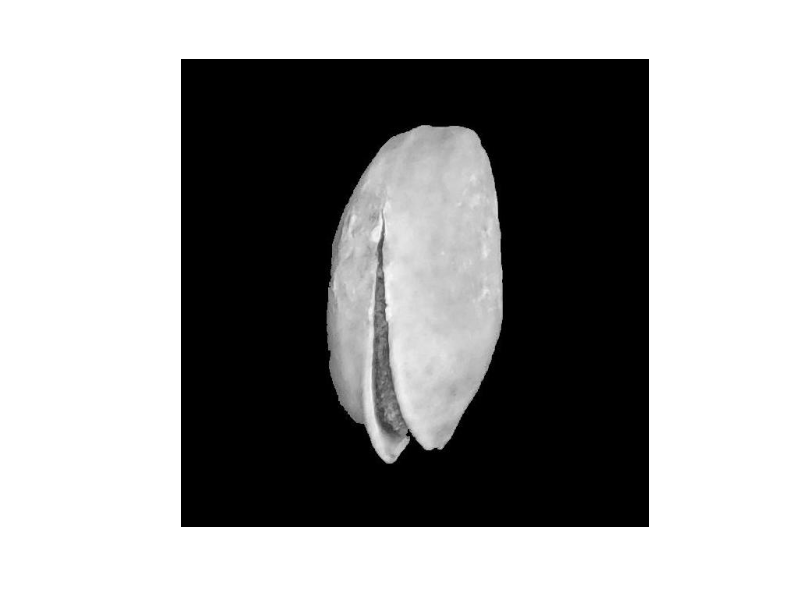}   &     \includegraphics[width=0.25\linewidth]{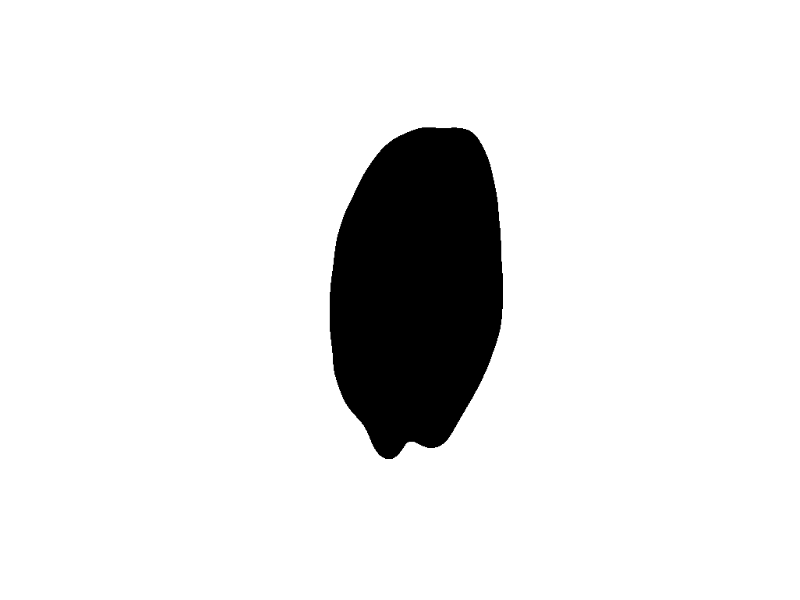}\\
    \end{tabular}
    \caption{Segmentation using  \cite{otsu}: (a) the original image (in gray-scale) and (b) the estimated silhouette. The original image is from dataset provided in \cite{singh2022}. }
    \label{segm}
\end{figure}
\subsection{Contours }
\label{conts}
\par The binary surface $\mathbf{I}'$ represents the \textit{silhouette} of the object in $\mathbf{I}$. It encodes the \textit{shape} of the object. While one can suggest using $\mathbf{I}'$ to consider the shape of the object, this representation is suboptimal (see \cite{srivastava2016} for a discussion). For the further analysis, inspired by \cite{axe1}, we focus on the contour $\mathbf{C}$ of the object, as it is a more parsimonious representation of the object's form. The contour variable $\mathbf{C}$ is defined as $\mathbf{C} : [0, 1]\to \mathbb{R}^2$, where  $$
\mathbf{C}(t)= \begin{pmatrix}
    C_x(t) \\ 
    C_y(t)
\end{pmatrix},
$$
and  $t\in [0, 1]$ represents the proportion of the curve that has been traveled from the start ($t=0$) to the end ($t=1$) and $C_x$, $C_y$ are the coordinate functions. Formally, the contour is a simple planar curve: it is closed, $\mathbf{C}(0)=\mathbf{C}(1)$ and $\mathbf{C}$ is injective on $[0, 1) \to \mathbb{R}^2$. \par
Like $\mathbf{I}'$, $\mathbf{C}$ encodes the \textit{silihouette} of the object in $\mathbf{I}$, by delimiting the interior and the exterior of the object. In practical cases, we estimate contour using the marching squares algorithm on the silhouette image \citep{maple2003}, see Figure \ref{cont} for an illustration. 
\begin{figure}[ht]
    \centering
    \begin{tabular}{c c c }
        (a) & (b) & (c) \\ 
        \includegraphics[width=0.3\linewidth, align=c]{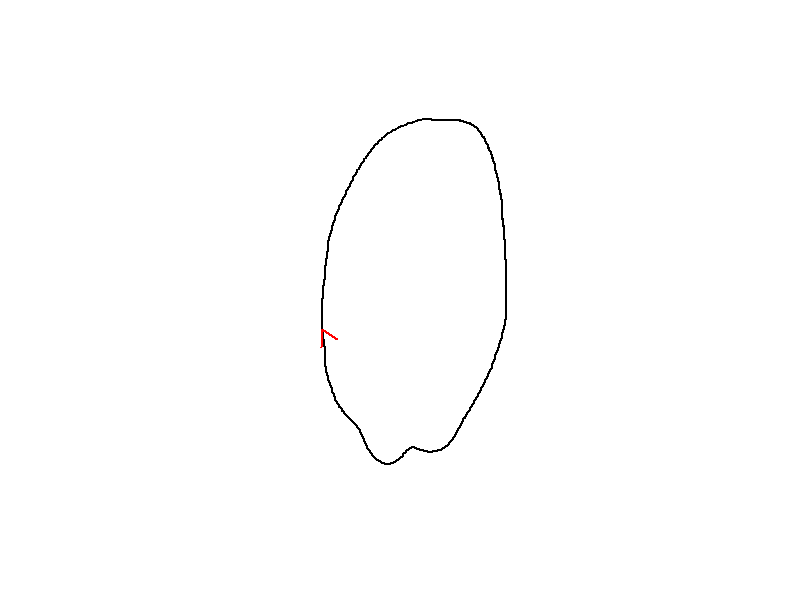} &
                    \includegraphics[width=0.20\linewidth, align=c ]{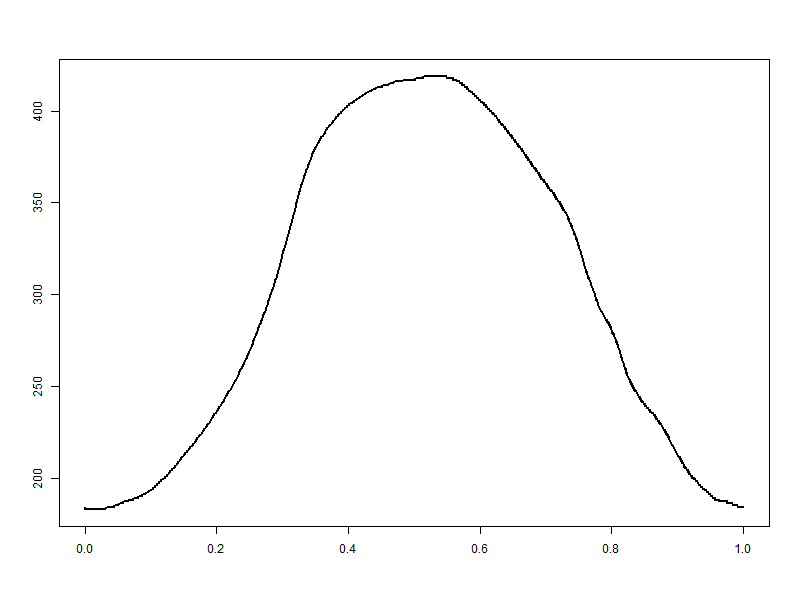}& 
                                \includegraphics[width=0.20\linewidth, align=c ]{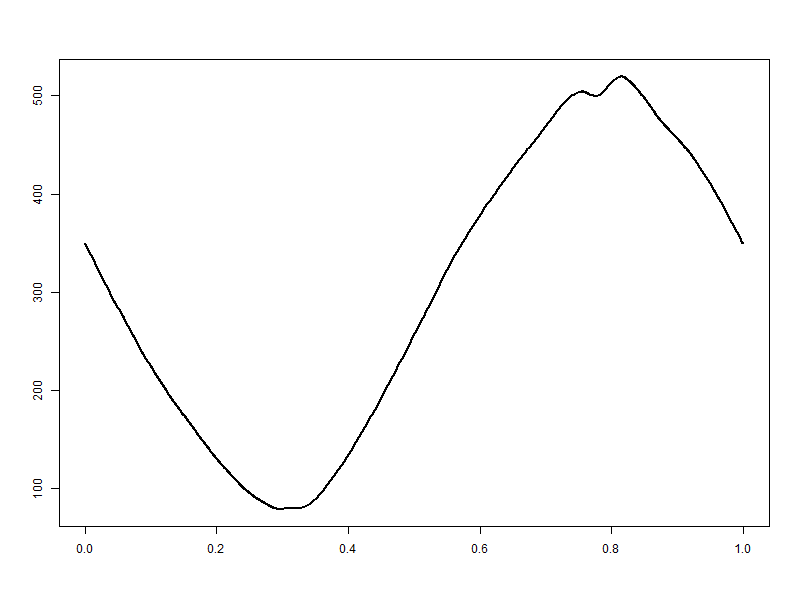}
    \end{tabular}
    \caption{Estimation of the contour using the marching square algorithm \cite{maple2003}: (a) is the estimated contour from $\mathbf{I}'$ (Fig~\ref{segm}), where the red arrow indicates the starting point ($t=0$) and (b)\&(c) are the associated coordinate functions: $C_x(t)$ and $C_y(t)$. }
    \label{cont}
\end{figure}
\par For the statistical analysis of the contours, we follow the recent work of \cite{axe1}, which assumes that $\mathbf{C}$ is random planar curve taking values in the Hilbert space $\mathcal{C}= L_2([0,1], \mathbb{R}) \times L_2([0,1], \mathbb{R})$ holding :
$$
\mathbf{C}=\rho \mathbf{O}\mathbf{\tilde C} \circ \gamma_\delta + \mathbf{c}
$$
where $(\rho, \mathbf{O}, \mathbf{c}, \gamma_\delta)$ are the deformation variables and $\mathbf{\tilde C}$ is the \textit{shape} of $\mathbf{C}$ in the sense of \cite{kendall}, i.e, what remains when all deformations are discarded. The shape variable takes value in the quotient space $\mathbf{S}_\infty\symbol{92} \sim$ where $$
\mathbf{S}_\infty
= \left\{\boldsymbol{f}=\begin{pmatrix}
    f_x & f_y
\end{pmatrix}^\top \in \mathcal{C}, \ \int_0^1 f_x(t)dt=\int_0^1 f_y(t)dt=0,\ \norm{\boldsymbol{f} }_\mathcal{C}=1 \right\},$$ 
and $\boldsymbol{f}, \boldsymbol{g} \in \mathbf{S}_\infty$ hold $\boldsymbol{f}\sim\boldsymbol{g}$ when they share the same shape, i.e. $\boldsymbol{\tilde f}= \boldsymbol{\tilde g}.$\par 
The deformation variables transform $\mathbf{\tilde C}$ to $\mathbf{C}$ by the action of uniform scaling ($\rho\in \mathbb{R}^+$), translating ($\mathbf{c}\in \mathbb{R}^2$), rotating ($\mathbf{O}\in SO(2)$) and reparametrizing ($\gamma \in \Gamma$, where $\Gamma$ is the space of reparametrization functions from $[0,1]\to [0,1]$). 
The translation and scaling deformations can be estimated using the definition of $\mathbf{\tilde C}$ : $$ \displaystyle \mathbf{c}=\begin{pmatrix}
    \int_0^1 C_x(t)dt & 
    \int_0^1 C_y(t)dt
\end{pmatrix}^\top  \text{ and} \ \rho=\norm{\mathbf{C}-\mathbf{c}}_{\mathcal{C}}.$$  For estimating the remaining parameters, we need first to determine the Fréchet mean $\boldsymbol{\mu}$.  Then, using $\boldsymbol{\mu}$, we estimate the rotation and the reparametrization deformations by the Iterative Closest Function algorithm, as suggested in \cite{axe1}. \par In the following, we will assume that the whole set of deformation variables is estimated which allows to  define the aligned and centered  planar curve $\mathbf{S}$ as $$
\mathbf{S}(t)=\rho \mathbf{\tilde C}(t), \ 
$$
with $t\in [0,1]$. Remark that by definition, $\mathbf{S}=\begin{pmatrix}
    S_x & S_y
\end{pmatrix}^\top$ holds $\norm{\mathbf{S}}_{\mathcal{C}}= \rho$ and $\int_0^1S_x(t)dt=\int_0^1S_y(t)dt=0$. \par Using $\mathbf{S}$, the next section focuses on the analysis of texture under the star-shaped domain hypothesis.

\section{Textures}
\label{text-sec}
In \cite{cootes}, the authors analyze images using  training
set in which corresponding ‘landmark’ points have been marked on every image. Under this framework, their approach  studies discrete textures and discrete contours (or landmarks) in images. However, since, in most real cases, images are not provided with meaningful landmarks, their approach is limited. This section proposes a less restrictive framework. \par 
We assume that we have the contours $\mathbf{C}$ associated with the image $\mathbf{I}$, and we derive the aligned and centered  curve $\mathbf{S}$. Based on them, the aim of this section is to study the texture, which we define as the function $\mathbf{T}^*$: 
$$
\mathbf{T}^*(u):=\mathbf{I}(u)
$$
where $u\in \mathcal{D}$ and $\mathcal{D}$ is the set of points inside the object. Note that unlike the approach of  \cite{cootes}, our definition of the texture $\mathbf{T}^*$ only considers the color variations inside the object. Moreover, it is clear that this definition heavily relies on the set $\mathcal{D}$: the domain of $\mathbf{T}^*$. As different objects may have different sets $\mathcal{D}$, it is necessary to define a global parametrization of $\mathcal{D}$. This is the aim of the next result, which gives a global parametrization of $\mathbf{T}^*$, under the hypothesis of $\mathcal{D}$ being a star-shaped domain. 
\begin{Proposition}If $\mathcal{D}$ is a star-shaped domain at \textit{center} $\mathbf{c}$, then for $v\in \mathbf{D}=\{ z\in \mathbb{C}, |z|\leq 1\}$  
$$
  \mathbf{O}\varphi(v)+\mathbf{c}\in \mathcal{D},
$$
with   
$$
\varphi(v)= |v|\mathbf{S}\left(\frac{\text{arg}(v)}{2\pi}\right).
$$ 
\label{prop1}
\end{Proposition}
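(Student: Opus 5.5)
We reduce the statement to the defining property \eqref{star} of a star-shaped domain, applied to a point on the contour.

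First we make the bookkeeping explicit. Recall that $\mathbf{C}=\rho\mathbf{O}\mathbf{\tilde C}\circ\gamma_\delta+\mathbf{c}$ and $\mathbf{S}=\rho\mathbf{\tilde C}$. Hence for every $s\in[0,1]$,
$$
\mathbf{O}\mathbf{S}(s)+\mathbf{c}=\mathbf{C}\bigl(\gamma_\delta^{-1}(s)\bigr),
$$
where $\gamma_\delta\in\Gamma$ is taken to be a bijection of $[0,1]$ (a reparametrization). So $\mathbf{O}\mathbf{S}(s)+\mathbf{c}$ is a point on the trace of the contour $\mathbf{C}$. Since $\mathbf{C}$ is the frontier of the object and $\mathcal{D}$ is the closure of the interior, frontier included, this point lies in $\mathcal{D}$.

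Next we fix $v\in\mathbf{D}$. If $v=0$, then $\varphi(v)=0$ and the point is $\mathbf{c}$. The center belongs to $\mathcal{D}$, because applying \eqref{star} to any $u\in\mathcal{D}$ with $\lambda=0$ gives $\mathbf{c}\in\mathcal{D}$; we use that $\mathcal{D}\neq\emptyset$. If $v\neq0$, set $\lambda=|v|\in(0,1]$ and $s=\arg(v)/2\pi\in[0,1)$, with the convention $\arg(v)\in[0,2\pi)$. By linearity of $\mathbf{O}$,
$$
\mathbf{O}\varphi(v)+\mathbf{c}=\lambda\bigl(\mathbf{O}\mathbf{S}(s)+\mathbf{c}\bigr)+(1-\lambda)\mathbf{c}.
$$
Let $u=\mathbf{O}\mathbf{S}(s)+\mathbf{c}\in\mathcal{D}$. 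Then the point above equals $(1-\lambda)\mathbf{c}+\lambda u$, and \eqref{star} immediately gives that it lies in $\mathcal{D}$.

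The main obstacle is justifying that the center $\mathbf{c}$ appearing in \eqref{star} coincides with the translation $\mathbf{c}$ of the contour decomposition. The latter is the mean $\int_0^1\mathbf{C}(t)\,dt$. As stated, the proposition takes $\mathcal{D}$ star-shaped at the same $\mathbf{c}$, so we read the hypothesis this way and note explicitly that it is assumed. A second point needing care is that $\mathbf{S}$ is only an $L_2$ object. We will assume the contour is a genuine continuous closed simple curve, as in Section~\ref{conts}, so that pointwise evaluation $\mathbf{S}(s)$ makes sense. Closedness, $\mathbf{S}(0)=\mathbf{S}(1)$, also makes $\varphi$ well defined regardless of the branch chosen for $\arg$.
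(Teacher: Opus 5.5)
Your proof is correct and follows the paper's route: apply the star-shaped property to the contour points $\mathbf{C}(t)\in\mathcal{D}$, rewrite $\mathbf{C}(t)-\mathbf{c}$ as $\mathbf{O}\mathbf{S}(\cdot)$, and read $v=\lambda e^{i2\pi s}$ as a polar parametrization of $\mathbf{D}$. Your version is slightly more careful than the paper's, which drops $\gamma_\delta$ when equating $(1-\lambda)\mathbf{c}+\lambda\mathbf{C}(t)$ with $\mathbf{c}+\mathbf{O}\lambda\mathbf{S}(t)$ and leaves the case $v=0$ and the identification of the star center with the contour mean implicit.
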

The detailed proof of Proposition \ref{prop1} is  given in Appendix \ref{annex}, for the sake of readability. This result shows that there exists a mapping of $\mathcal{D}$ with the unit disk $\mathbf{D}$, which allows the definition of textures on a common domain $\mathbf{D}$, instead of $\mathcal{D}$: \begin{equation}
    \mathbf{T}(v)=\mathbf{I}(\mathbf{O}\varphi(v)+\mathbf{c})\ 
\label{text_eq}
\end{equation}
where $ v\in \mathbf{D}$. Note that $\mathbf{T}(v)$ is a real-random surface defined on the complex unit disk $\mathbf{D}$. In the following, for the sake of simplicity in the calculations and illustrations, we use the following bijective transformation from $[0, 1]^2 \to \mathbf{D} $, $
g(x, y)= xe^{i2\pi y}, 
$ such as $\mathbf{T}\circ g$ takes values in $\mathcal{I}$. Figure \ref{text-fig} presents the texture $\mathbf{T}\circ g$ from our previous example. Figure \ref{text-fig} (b) shows that the texture representation is independent of the object’s contour.
\begin{figure}[ht]
\centering
\begin{tabular}{ c c }
(a) & (b) \\
 \includegraphics[width=0.25\linewidth]{Figs/ex/kirmizi.png}     &  \includegraphics[width=0.25\linewidth]{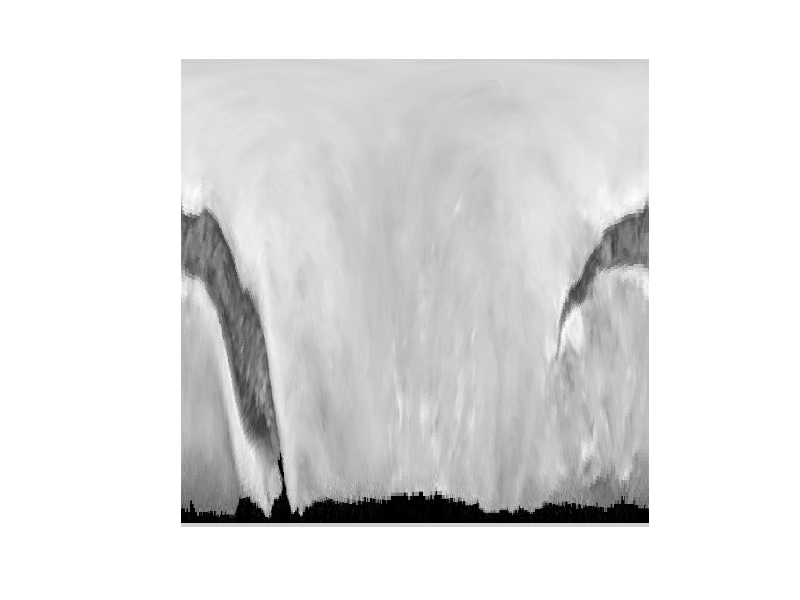}    
\end{tabular}
 \caption{Example of the texture, (a) image in gray-scale and (b) the associated texture (using $g$). }
 \label{text-fig}
\end{figure}
\par Proposition \ref{prop1} allows to map $\mathcal{D}$ to $\mathbf{D}$, which has several practical advantages. However, if the statistical analyses are performed using this transformation, it is important to study the inverse mapping, as it will be more convenient for interpretation. 
\begin{Proposition} Let $\alpha$ be the \emph{angle function} associated with $\mathbf{S}$: 
$$ \alpha(\theta)= \text{arg}\left(\begin{pmatrix}
    1 & i
\end{pmatrix} \mathbf{S}\left(\frac{\theta}{2\pi}\right) \right).
$$ 
If $\alpha(\theta)$ is injective on $(0, 2\pi] \to (0, 2\pi]$, and $u \in \mathcal{D}/\{\mathbf{c}\}$ , there exists a unique $v\in \mathbf{D}$,  such as $$
u=\mathbf{O}\varphi(v)+\mathbf{c},
$$
where $v=\lambda e^{i\theta^*}$, with 
$\alpha(\theta^* )= \arg\left( \begin{pmatrix}
    1 & i
\end{pmatrix} (u-\mathbf{c}) \right)$, $\lambda=  \norm{u-\mathbf{c}}_2/\norm{\mathbf{S}\left(\frac{\theta^*}{2\pi}\right) }_2 $ and $\norm{\cdot}_2$ is the Euclidean $2$-norm of $\mathbb{R}^2$. 
\label{prop2}
\end{Proposition}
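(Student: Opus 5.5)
Since $\mathbf{O}\in SO(2)$ is invertible, we first reduce the problem. Setting $w=\mathbf{O}^{-1}(u-\mathbf{c})$, the equation $u=\mathbf{O}\varphi(v)+\mathbf{c}$ becomes $\varphi(v)=w$. Throughout, we identify $\mathbb{R}^2$ with $\mathbb{C}$ through the map $x\mapsto (1\ \ i)\,x$, so that $\arg$ and $\norm{\cdot}_2$ correspond to the complex argument and modulus. We follow the statement's convention that the angle of $u-\mathbf{c}$ is read in the frame aligned with $\mathbf{S}$, i.e.\ after removing $\mathbf{O}$; otherwise $\arg$ is simply shifted by the rotation angle of $\mathbf{O}$. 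Writing $v=\lambda e^{i\theta}$ with $\lambda\in(0,1]$ and $\theta\in(0,2\pi]$, the definition of $\varphi$ in Proposition~\ref{prop1} gives $\varphi(v)=\lambda\,\mathbf{S}(\theta/2\pi)$. Its complex argument is $\alpha(\theta)$ and its modulus is $\lambda\norm{\mathbf{S}(\theta/2\pi)}_2$. The equation $\varphi(v)=w$ with $w\neq 0$ is therefore equivalent to the pair of conditions
\[
\alpha(\theta)=\arg(w) \quad\text{and}\quad \lambda=\norm{w}_2/\norm{\mathbf{S}(\theta/2\pi)}_2 .
\]
We must also check that $\mathbf{S}(t)\neq 0$ for every $t$. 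This holds because $\mathbf{c}$ lies in the interior of $\mathcal{D}$ while $\mathbf{c}+\mathbf{O}\mathbf{S}$ traces the contour, and the contour is the boundary; so $\alpha$ is well defined.

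For uniqueness, suppose two pairs $(\lambda,\theta)$ and $(\lambda',\theta')$ both solve the system. Then $\alpha(\theta)=\alpha(\theta')$, and injectivity of $\alpha$ gives $\theta=\theta'$. The modulus equation then forces $\lambda=\lambda'$. Since $u\neq\mathbf{c}$, we have $\lambda>0$, so the polar representation of $v$ is unique and $v$ itself is unique.

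Existence is the main obstacle, and it needs two things. First, $\alpha$ must take the value $\arg(w)$, i.e.\ $\alpha$ must be surjective onto $(0,2\pi]$. Second, the resulting $\lambda$ must satisfy $\lambda\le 1$, so that $v\in\mathbf{D}$. For surjectivity, we use that $\mathbf{S}$ is a closed continuous curve that does not pass through $0$. Hence $\theta\mapsto\alpha(\theta)$ is a continuous map of the circle into the circle, and it is injective by hypothesis. A continuous injective map from the circle to itself is a homeomorphism, by compactness and invariance of domain in dimension one, so it is onto. Alternatively, the Jordan curve theorem shows that the contour winds once around the interior point $\mathbf{c}$, which also gives surjectivity.

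For $\lambda\le1$, let $\theta^*$ be the angle obtained above. Consider the ray from $0$ in direction $\arg(w)$; it meets the curve $\mathbf{S}$ only at $s^*=\mathbf{S}(\theta^*/2\pi)$, by injectivity of $\alpha$. The point $\mathbf{c}+\mathbf{O}w=u$ lies in $\mathcal{D}$ and on this ray. We need to show that the segment from $\mathbf{c}$ to the boundary along this ray ends exactly at $\mathbf{c}+\mathbf{O}s^*$. Since $\mathcal{D}$ is compact, the ray leaves $\mathcal{D}$ at some last point. That point belongs to $\partial\mathcal{D}$, which is the contour, so it equals $\mathbf{c}+\mathbf{O}\mathbf{S}(t)$ for some $t$, and it has angle $\arg(w)$. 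By the uniqueness of the curve point in that direction, this exit point is $\mathbf{c}+\mathbf{O}s^*$. Therefore $\norm{w}_2\le\norm{s^*}_2$, which gives $\lambda\le1$. Finally, we confirm that $\mathbf{O}\varphi(v)+\mathbf{c}=u$ by construction, completing the proof.
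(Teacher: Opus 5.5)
Your proof is correct and has the same skeleton as the paper's. Uniqueness is argued identically: injectivity of $\alpha$ gives $\theta_1=\theta_2$, and comparing moduli then gives $\lambda_1=\lambda_2$.

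The difference is in existence. The paper simply asserts that, because $\mathcal{D}$ is star-shaped and $\mathbf{S}$ traces its frontier, every $u\in\mathcal{D}$ can be written as $\lambda\mathbf{S}(\theta^*/2\pi)+\mathbf{c}$ with $\lambda\in(0,1]$. It justifies neither the existence of the direction $\theta^*$ nor the bound $\lambda\le 1$. You prove both: the first from the fact that a continuous injection of the circle into itself is onto, the second from the last exit point of the ray, which lies on $\partial\mathcal{D}$ by compactness.

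Your reduction to $w=\mathbf{O}^{-1}(u-\mathbf{c})$ also repairs the paper's inconsistent handling of the rotation. The statement omits $\mathbf{O}$ in the angle condition, the paper's existence step drops $\mathbf{O}$ altogether, and its uniqueness step writes $\mathbf{O}^\top u-\mathbf{c}$ where $\mathbf{O}^\top(u-\mathbf{c})$ is meant.

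Three small remarks:
\begin{itemize}
\item The exit-point argument alone already produces $\theta^*$, because the exit point equals $\mathbf{c}+\mathbf{O}\mathbf{S}(t)$ for some $t$ and points in the direction of $w$. The circle-homeomorphism step is therefore redundant.
\item You should derive, not assert, that $\mathbf{c}$ is an interior point. $\mathbf{c}\in\mathcal{D}$ follows from \eqref{star} with $\lambda=0$. $\mathbf{c}$ is off the contour because $\alpha$ is assumed defined for every $\theta$. Star-shapedness alone would allow a boundary center, as for a circular sector at its apex.
\item Apart from placing $\mathbf{c}$ in $\mathcal{D}$, your argument never uses star-shapedness. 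This reflects the fact that injectivity of $\alpha$ already forces each ray from $\mathbf{c}$ to meet the contour exactly once.
\end{itemize}
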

We give a detailed proof of Proposition \ref{prop2} in Appendix \ref{annex}. 
This result says that for all elements $u \in \mathcal{D}/\{\mathbf{c}\}$, the inverse mapping is obtained using $$
\mathbf{I}(u)= \mathbf{T}\left(\frac{\norm{u-\mathbf{c}}_2}{\norm{\mathbf{S}\left(\frac{\theta^*}{2\pi}\right) }_2} \varphi(e^{i\theta^*})+ \mathbf{c} \right),
$$
where $
    \alpha(\theta^*)= \arg\left( \begin{pmatrix}
    1 & i
\end{pmatrix} (u-\mathbf{c}) \right),$  and if $u=\mathbf{c}$, we have that 
$\mathbf{I}(u)= \mathbf{T}\left(\varphi(0_\mathbb{C})+ \mathbf{c} \right)$, meaning that $\mathbf{c}$ corresponds to the 
center of $\mathbf{D}$. Hence, obtaining the inverse transformation of $u\in \mathcal{D}/\{\mathbf{c} \}$ reduces to resolve numerically the problem $$
    \alpha(\theta^*)= \theta,  
\text{where } u= \norm{u-\mathbf{c}}_2\begin{pmatrix}
    \cos(\theta) \\ 
    \sin(\theta)
\end{pmatrix}+\mathbf{c}.$$
\section{Statistical analysis of textures and contours}
\label{stat}
As images are observed in a discrete grid, a crucial step, before the analysis, is to estimate the functional form of $\mathbf{T}$ and $\mathbf{C}$. In this work, for doing so, we use the basis expansion technique, a common functional data analysis technique \citep{ramsay2005}. It assumes that the texture $\mathbf{T}$ and the contour $\mathbf{C}$ can be sufficiently well-approximated in a finite set of functions: 
 $$
\mathbf{T}\circ g(u)= \sum_{k=1}^{M_1}a_k\phi_k(u), \ \ \text{and }  \mathbf{C}(t)=\mathbf{c}+\sum_{k=1}^{M_2} \mathbf{c}_k \psi_k(t)
$$
where $u\in [0, 1]^2$ and $t\in [0, 1]$, $\{\phi_k\}_{k=1, \ldots, M_1}$ and $\{\psi_k\}_{k=1, \ldots, M_2}$
are some chosen basis. In these equations, the random variables $a_1, \ldots, a_{M_1}$ are real valued and $\mathbf{c}_1, \ldots, \mathbf{c}_{M_2}$ takes values in $\mathbb{R}^2$. 
\par Several bases can be chosen depending on the nature of the functional variables (see \cite{ramsay2005} for details). Here, we define $\{\phi_k\}_{k=1, \ldots, M_1}$ as $2$-d splines to have a maximum of details with low number of basis functions when approximating the texture function forms, and following \cite{axe1}, we define $\{\psi_k\}_k$ as the basis of Fourier functions. Under this hypothesis, we can obtain the shape $\mathbf{\tilde C}$ from the contour, by  
$$
\mathbf{\tilde C}= \frac{1}{\rho} \mathbf{O}^\top  \begin{pmatrix}
    \mathbf{c}_1 & \ldots & \mathbf{c}_{M_2} 
\end{pmatrix}  \boldsymbol{\beta}(\delta) \begin{pmatrix}
    \psi_1(t) \\
    \vdots \\
    \psi_{M_2}(t)
\end{pmatrix}
$$
where $\boldsymbol{\beta}(\delta)$ is an $M_2\times M_2$ orthogonal matrix which accounts for the reparametrization deformations, see \cite{axe1} for details. \par 
The statistical analysis of images is then performed by explicitly considering the two main components of the image $\mathbf{I}$: the texture and the shape of the object, $\mathbf{\tilde C}$. For doing so, we define the variable $\mathbf{Z}$, $$
\mathbf{Z}(z)= \begin{pmatrix}
    \mathbf{T}\circ g(u)&  
    \mathbf{\tilde C}^\top(t) 
\end{pmatrix}^\top , 
$$
where  $z=(u, t)$. Then, $\mathbf{Z}$ is a functional variable defined on different domains as it is valued in $\mathcal{I}\times \mathbf{S}_\infty$ \citep{happ}. 
\subsection{Classification of images using textures and contours} 
To show the applicability of our framework in statistical learning problems, this section presents its adaption for supervised classification. We consider that a binary response variable $Y$ is associated with the image $\mathbf{I}$ (and so with $\textbf{Z}$), and the aim is to estimate through a statistical model the relationship of $\mathbf{Z}$ with $Y$. Our focus is on interpretable models and therefore our attention is restricted to linear relationships.
\par As the shape $\mathbf{ \tilde C}$ takes values in a subspace of $\mathbf{S}_\infty$, which is a non-linear manifold with a complex geometry, it is necessary to use a projection of $\mathbf{\tilde C}$ into a more convenient linear space. In this work, we use the projection of $\mathbf{\tilde C}$ to the tangent space on at the point $\boldsymbol{\mu}$, its Fréchet mean. For a given shape $\mathbf{\tilde C}$ this projection is defined as 
$$
\mathcal{T}_{\boldsymbol{\mu}}(\mathbf{ \tilde C})= \mathbf{\tilde C}-  \boldsymbol{\mu}\langle \mathbf{\tilde C}, \boldsymbol{\mu}\rangle_\mathcal{C} .  
$$ 
Hence, linear models can be written as $$
\Psi(Y)= \beta_0+ \langle \boldsymbol{\beta}_{\mathbf{T}}, \mathbf{T}\circ g \rangle_{\mathcal{I}}+ \langle  \boldsymbol{\beta}_{\mathbf{\tilde C}} , \mathcal{T}_{\boldsymbol{\mu}}(\mathbf{\tilde C}\rangle_{\mathcal{C}}+ \epsilon,
$$
where $\epsilon$ is a random noise uncorrelated to $Y$, $\boldsymbol{\beta}= \begin{pmatrix}
    \boldsymbol{\beta}_{\mathbf{T}} \\\boldsymbol{\beta}_{\mathbf{\tilde C}}
\end{pmatrix}$ are the regression coefficients and
\begin{equation}
\Psi(Y) = \left\{
    \begin{array}{ll}
     \sqrt{\frac{\pi_1}{\pi_0}}     &\text{, if }  Y=0\\
        - \sqrt{\frac{\pi_0}{\pi_1}} &\text{, if } Y=1, 
    \end{array}
\right.
\label{reco}
\end{equation}
with $\pi_0= \mathbb{P}(Y=0)$ and $\pi_1=1-\pi_0$. This transformation allows estimating the linear discriminant coefficient $\boldsymbol{\beta}$ using the  regression of $\Psi(Y)$ on $\mathbf{Z}^*$, see \cite{preda}. However, as the regression with functional covariates is an ill-posed problem, we use the following methods for estimating $\boldsymbol{\beta}$: Principal components for multivariate functional data (MFPCA, \cite{happ}) and Partial least square regression for multivariate functional data (MFPLS, \cite{pls}). Both procedures address the problem of estimating $\boldsymbol{\beta}$ by projecting the predictor variable $\mathbf{Z}^*= \begin{pmatrix}
    \mathbf{T}\circ g&  
    \mathcal{T}_{\boldsymbol{\mu}}(\mathbf{\tilde C})^\top
\end{pmatrix}^\top $ into a convenient set of functions $\{\boldsymbol{f}_k\}_{k=1}^M:$ 
$$
\mathbf{Z}^*-\mathbb{E}(\mathbf{Z}^*) \simeq \sum_{k=1}^{M} \xi_k \boldsymbol{f}_k. 
$$
Here $\xi_k$ are obtained as $\xi_k= \langle \mathbf{Z}^*-\mathbb{E}(\mathbf{Z}^*), \boldsymbol{f}_k\rangle_{ \mathcal{I}\times \mathcal{C}}$ for $k=1, \ldots, M$. However, MFPLS and MFPCA differ in the used criteria for obtaining $\boldsymbol{f}$; on one hand, MCPCA uses the covariance of $\mathbf{Z}^*$, i.e. the representation seeks to optimally reconstruct the variabilities of $\mathbf{Z}^*$. On the other hand, MFPLS tries to obtain the scores which capture the main variations of the covariance operator between $\mathbf{Z}^*$ and $Y$.
\section{Application}
\label{app}
This section presents an application of the proposed method on real data. We use the datasets provided in \cite{singh2022}, which are composed of $2148$ images of $600 \times 600 $ resolution representing two different types of pistachios cultivated in Turkey: \textit{Kirmizi} and \textit{Siirt}. They contain $1232$ instances of \textit{Kirmizi} and $916$ instances of \textit{Siirt}. Figure \ref{xx} presents several examples. It shows that these images have been preprocessed to remove the background and to uniformize the size of pistachios. However, some deformations variables such as rotations remain present in images.\par 
\begin{figure}[h]
    \centering
    \begin{tabular}{c c c | c c}
    & \multicolumn{2}{c}{(\it Kirmizi)} & \multicolumn{2}{c}{(\it Siirt)} \\ 
         &\includegraphics[align=c, width=0.15\linewidth]{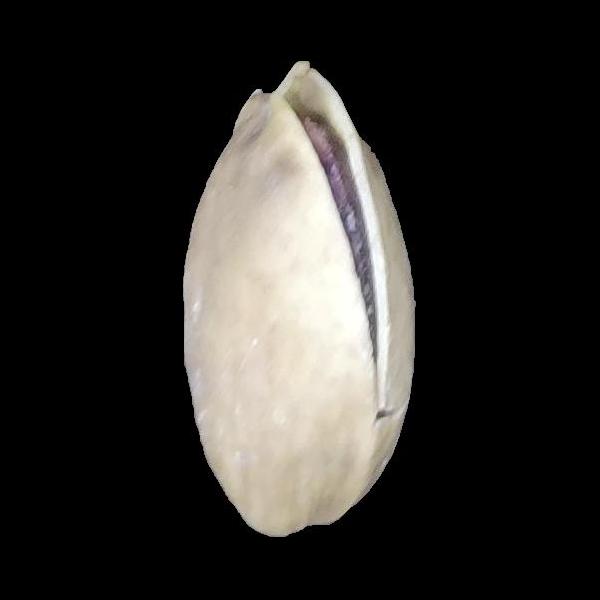}     & \includegraphics[align=c, width=0.15\linewidth]{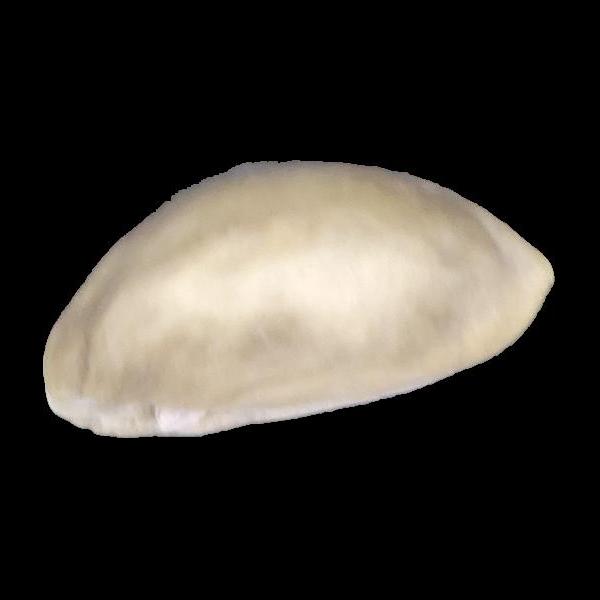} &
         \includegraphics[align=c, width=0.15\linewidth]{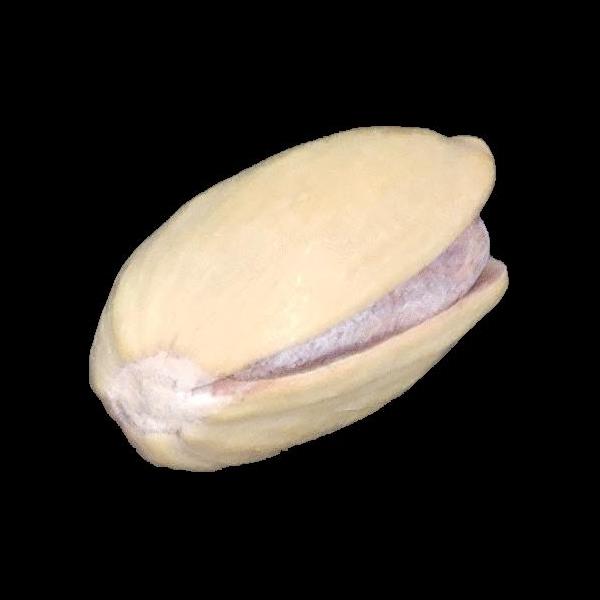}     & \includegraphics[align=c, width=0.15\linewidth]{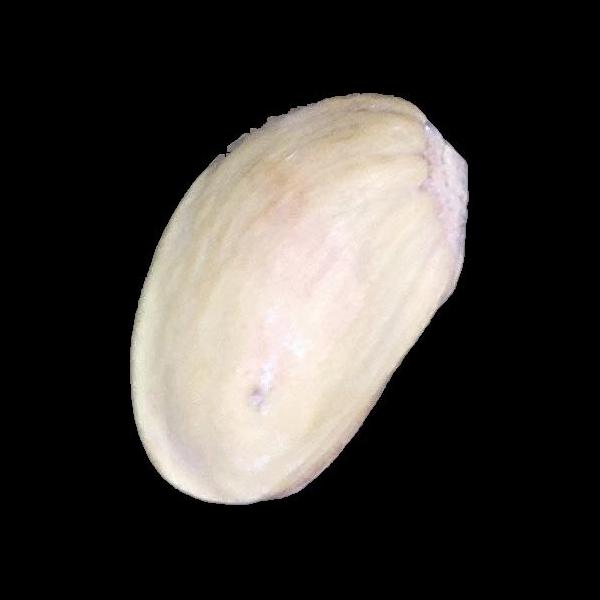} \\
    \end{tabular}
    \caption{Examples of two \textit{Kirmizi} and \textit{Siirt} pistachios \citep{singh2022}. Each image has a resolution of $600\times 600$.   }
    \label{xx}
\end{figure}
This numerical study aims to evaluate if the proposed models are able to recognize each species of pistachios even if they are not aligned. For doing so, as the images have been globally standardized, we add some random rotating and zooming  effects to assess the robustness of the proposed framework. Specifically, the rotation angle was drawn using uniform distribution in $[0, 2\pi]$ and the zooming scale was drawn from univariate distribution in $[0.2, 1]$. Figure \ref{xy} presents the obtained outputs using the images of Figure \ref{xx}. 
\begin{figure}[h]
    \centering
    \begin{tabular}{c c c | c c}
    & \multicolumn{2}{c}{(\it Kirmizi)} & \multicolumn{2}{c}{(\it Siirt)} \\ 
         &\includegraphics[align=c, width=0.15\linewidth]{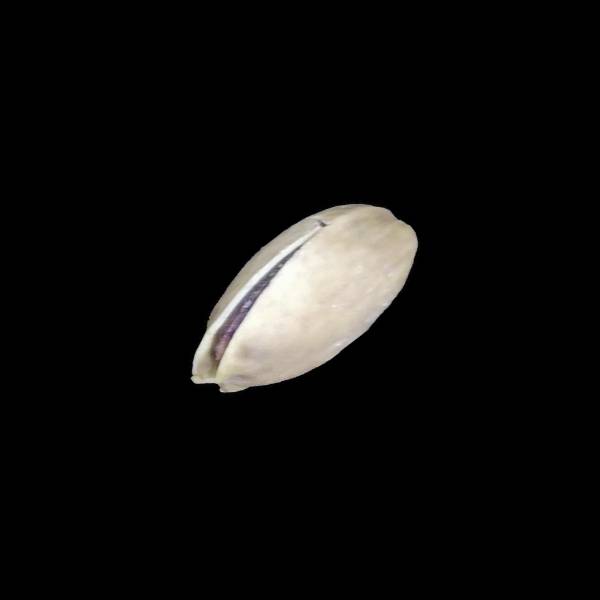}     & \includegraphics[align=c, width=0.15\linewidth]{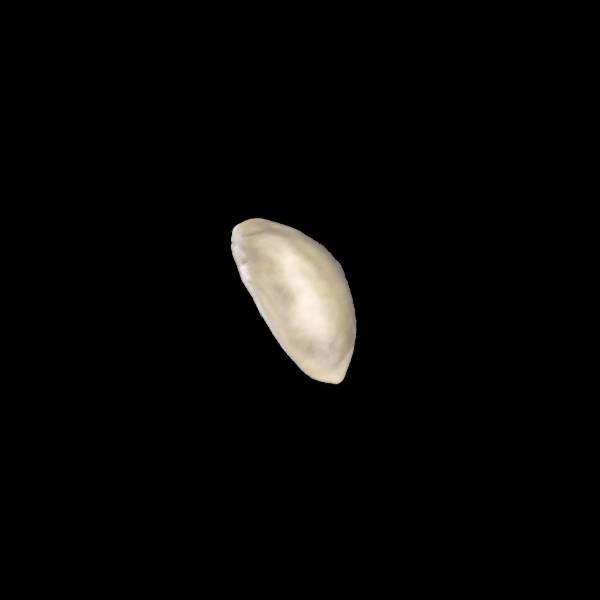} &
         \includegraphics[align=c, width=0.15\linewidth]{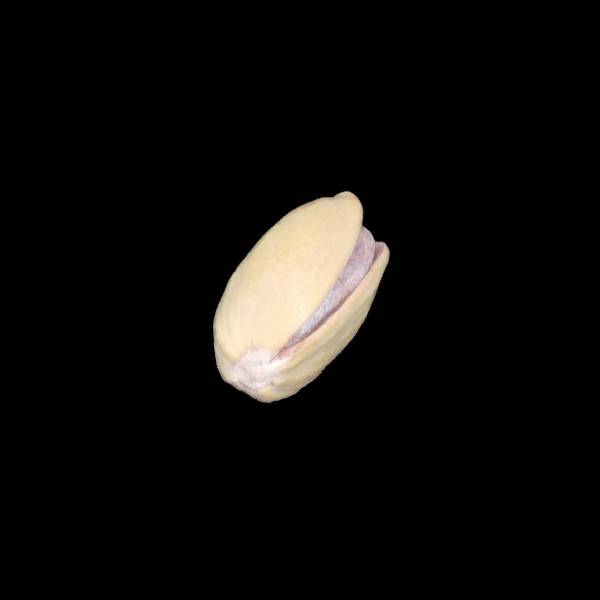}     & \includegraphics[align=c, width=0.15\linewidth]{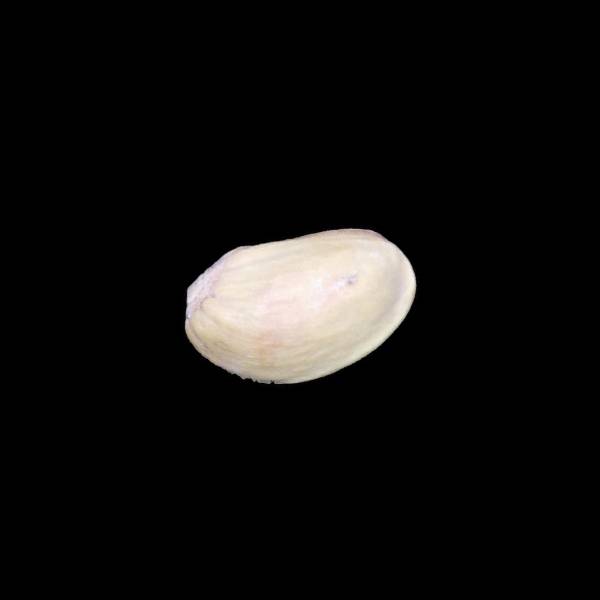} \\
    \end{tabular}
    \caption{Synthetic images using two images \textit{Kirmizi} and \textit{Siirt} pistachios \citep{singh2022}. Each image has a resolution of $600\times 600$.   }
    \label{xy}
\end{figure}
\subsection{Contours extractions and shapes estimation}
From the obtained synthetic images, we extract the contours using the Otsu and the marching squares algorithms \cite{maple2003, otsu}. From the contours, we estimate the Fréchet mean $\boldsymbol{\mu}$, and the shapes $\mathbf{\tilde C}$ using the framework proposed in \cite{axe1}. Figure \ref{cont_all} represents the obtained results using $M_2=30$ Fourier functions, to reconstruct the functional forms of the contours.\par 
\begin{figure}[ht]
    \centering
    \begin{tabular}{ c c c c c c}
    & (a) & (b) & (c) \\
          $\mathbf{C}$ & \includegraphics[width=0.15\linewidth, align=c]{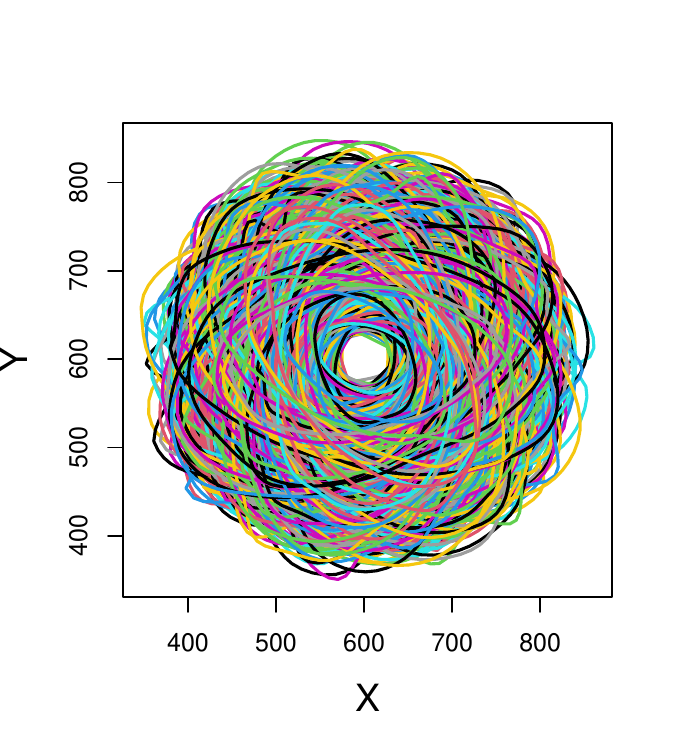}  &      \includegraphics[width=0.15\linewidth, align=c]{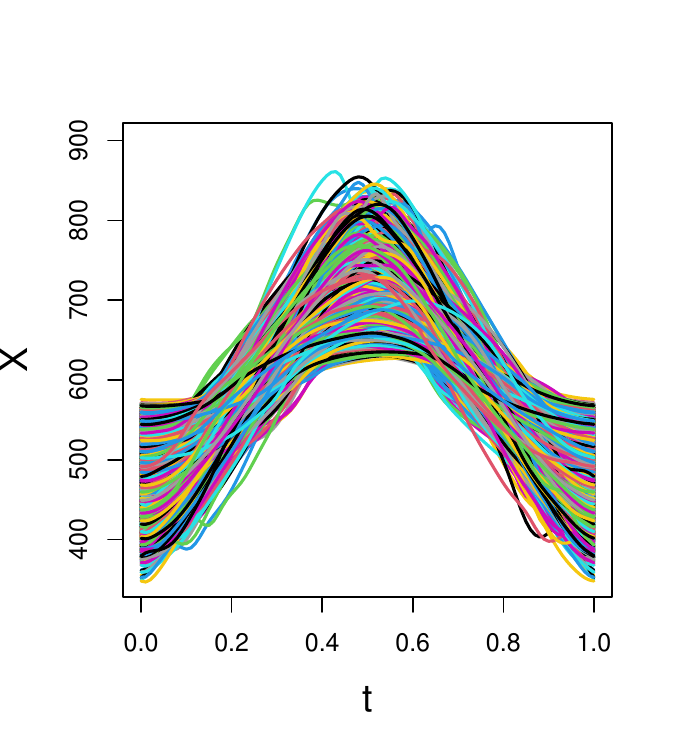} &     \includegraphics[width=0.15\linewidth, align=c]{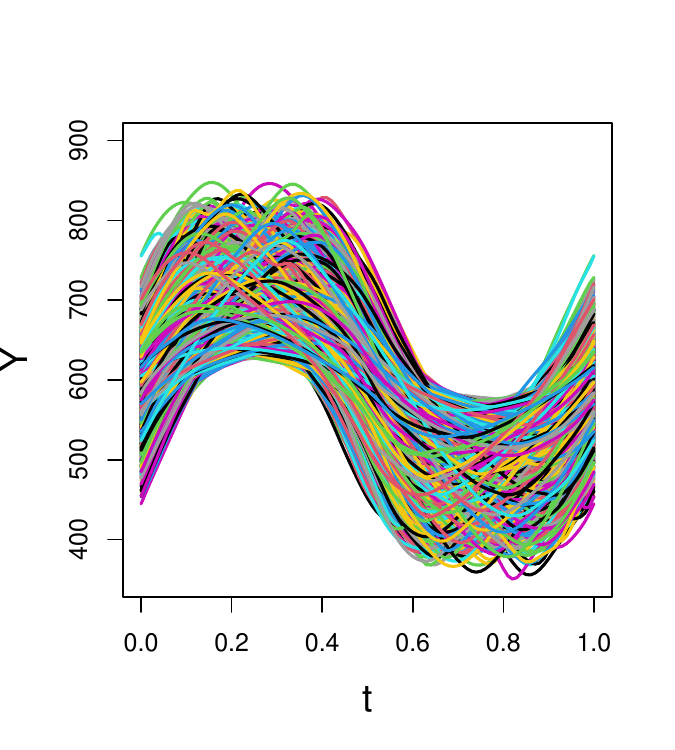} \\
                  $\mathbf{C}^*$ &\includegraphics[width=0.15\linewidth, align=c]{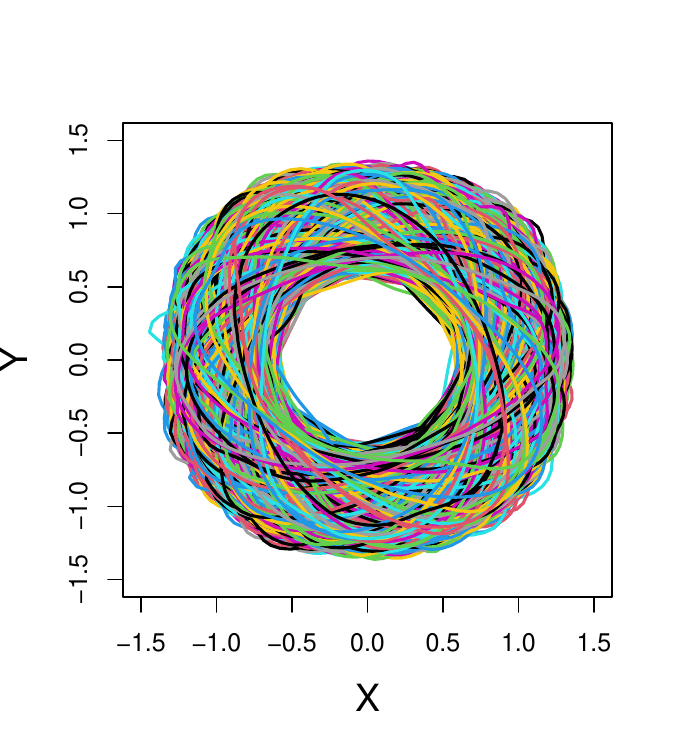}  &    \includegraphics[width=0.15\linewidth, align=c]{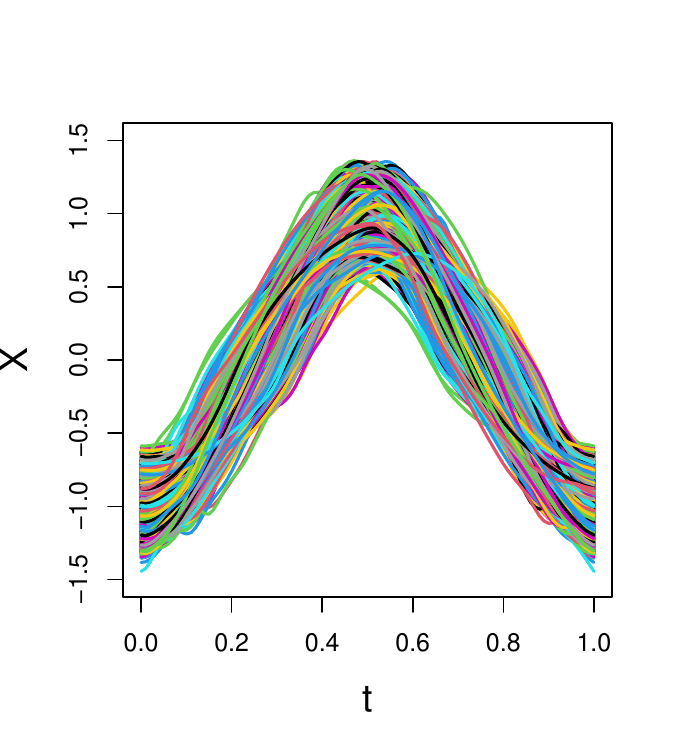} &    \includegraphics[width=0.15\linewidth, align=c]{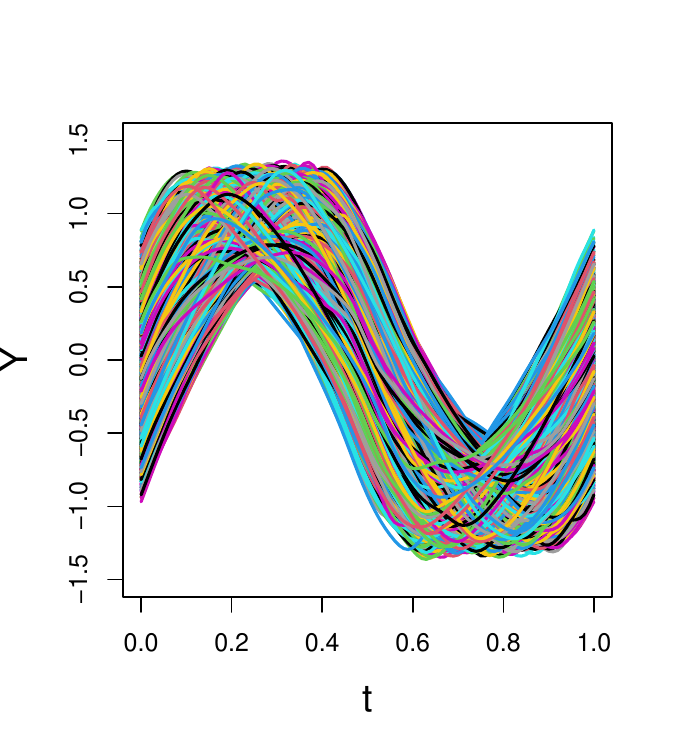} \\ 
$\hat{\boldsymbol{\mu}}$ &\includegraphics[width=0.15\linewidth, align=c]{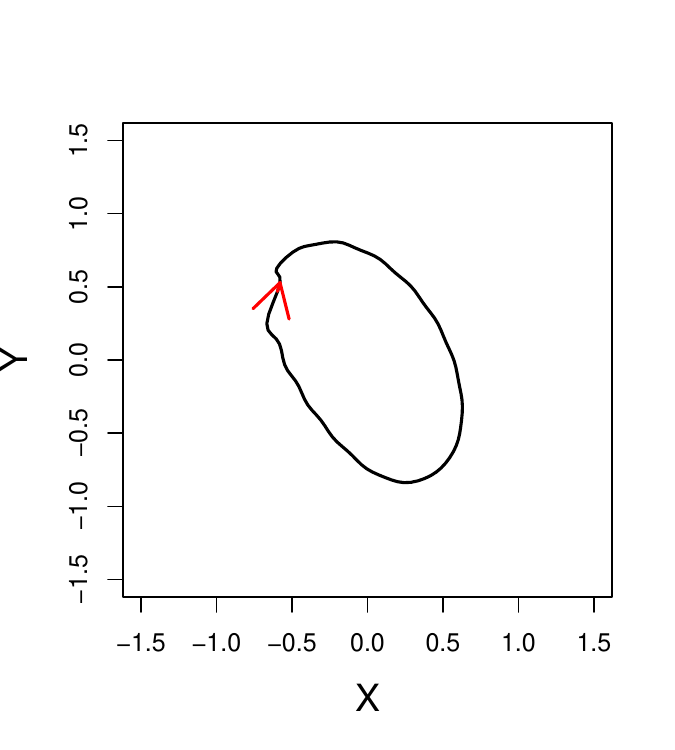}  &    \includegraphics[width=0.15\linewidth, align=c]{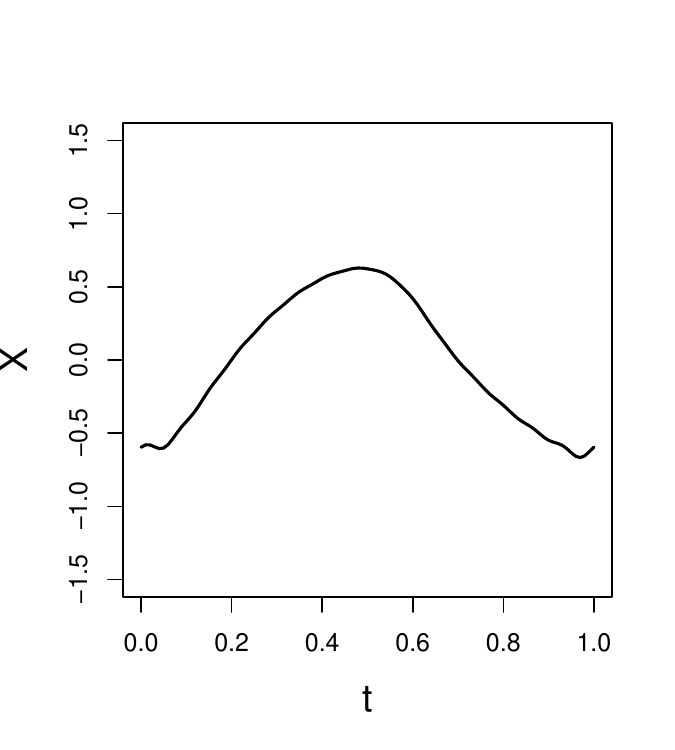} &    \includegraphics[width=0.15\linewidth, align=c]{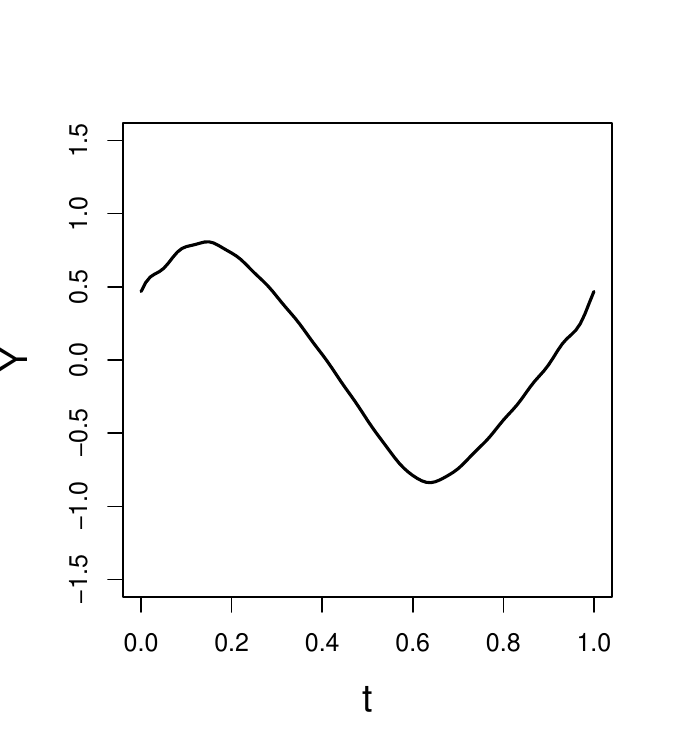} \\
$\mathbf{\tilde C}$& \includegraphics[width=0.15\linewidth, align=c]{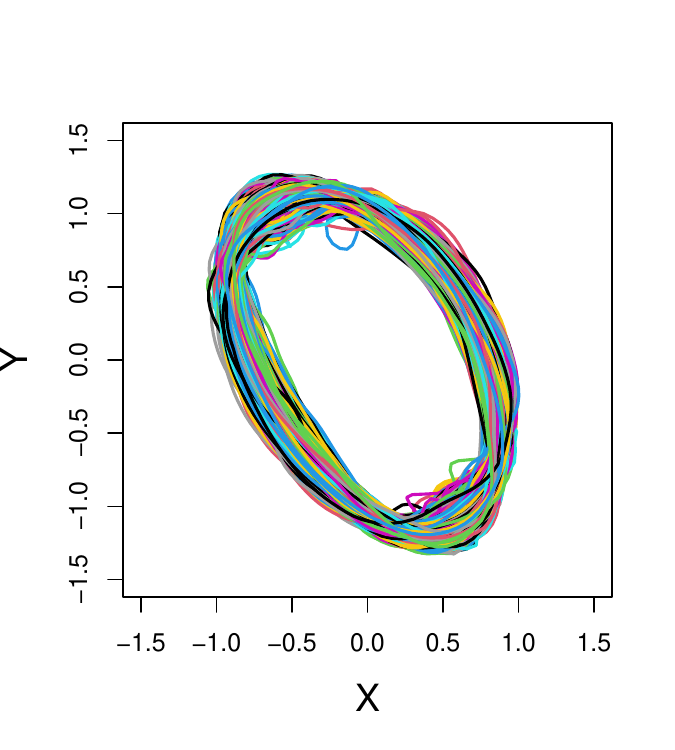} &    \includegraphics[width=0.15\linewidth, align=c]{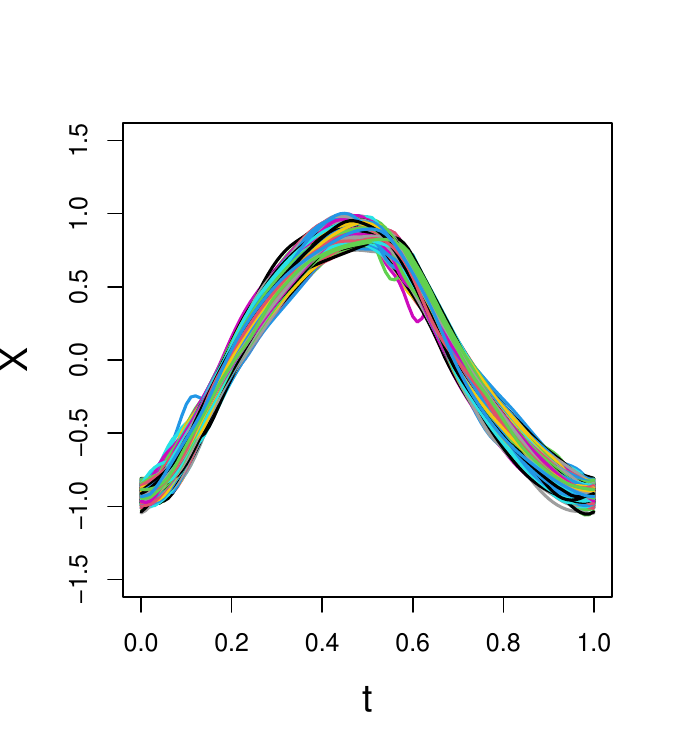} & \includegraphics[width=0.15\linewidth, align=c]{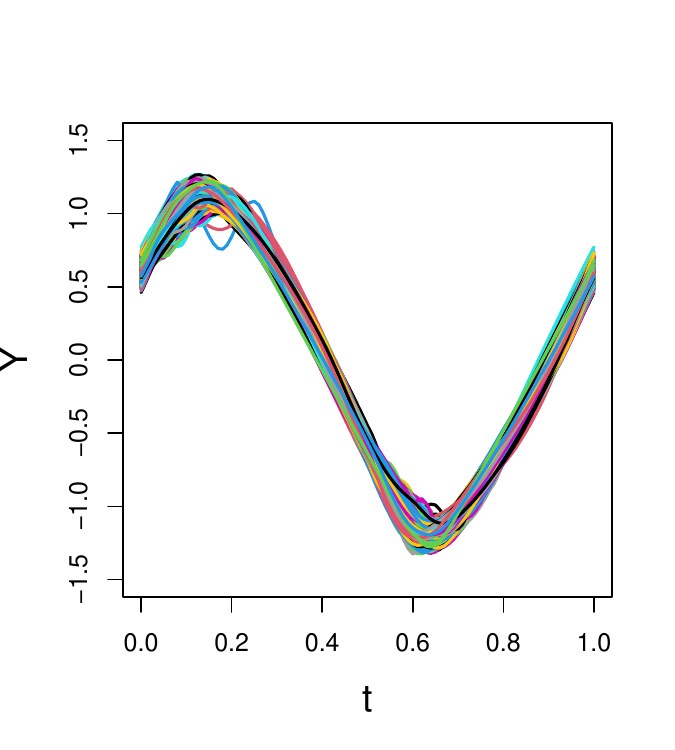}
    \end{tabular}
    \caption{From the contours $\mathbf{C}$, we estimate the preshapes ($\mathbf{C}^*$) by discarding the effect of the translation and scaling. Then, the Frechet mean $\boldsymbol{\mu}$ of the preshape is estimated. The shapes $\mathbf{\tilde C}$ are then estimated using $\boldsymbol{\mu}$. (a) represents the planar curves, (b) and (c) are the coordinates functions, respectively $C_x(t)$ and $C_y(t)$. }
    \label{cont_all}
\end{figure}
It also shows that the deformation variables are well-estimated, as the shape observations $\mathbf{\tilde C}$ are aligned to $\hat{\boldsymbol{\mu}}$, their estimated Fréchet mean.  We recall that the preshapes represent the contours without the global transformations, namely translations and scalings. 
\subsection{Textures}
From the previous part, we obtain the aligned and centered  curves $\mathbf{S}=\rho \mathbf{\tilde C}$ and the translation parameter $\mathbf{c}$ of each contour. With them, we estimate the texture variable of each image using \eqref{text_eq}. Figure \ref{text_ap} presents an example of the texture and the aligned and centered curve for a given image. 
\begin{figure}[ht]
    \centering
    \begin{tabular}{c c c}
         (a) & (b) & (c) \\
         \includegraphics[align=c, width=.2\linewidth]{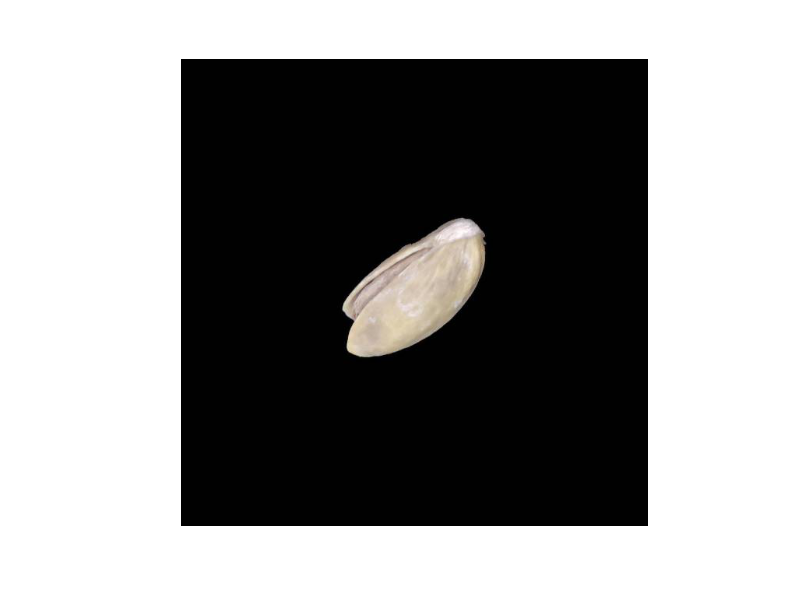}& \includegraphics[align=c, width=.17\linewidth]{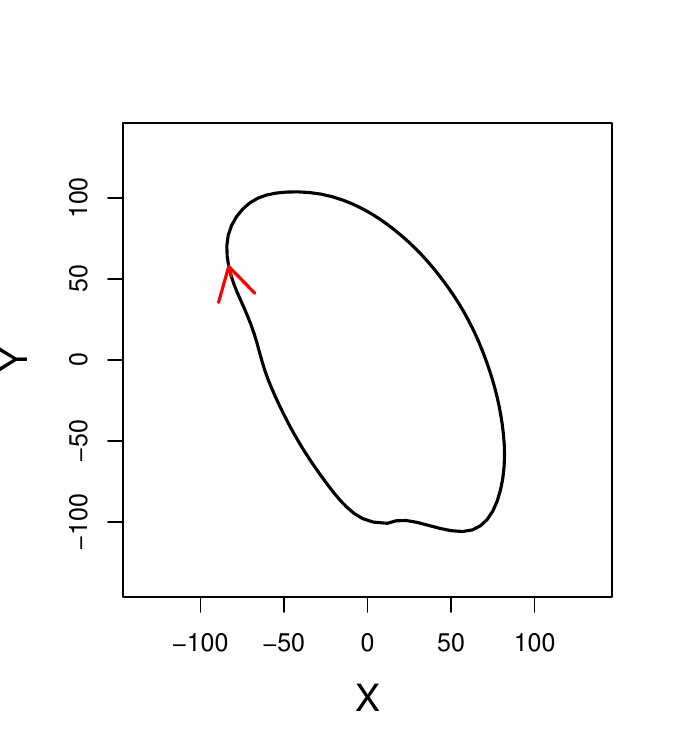} & \includegraphics[align=c, width=.2\linewidth]{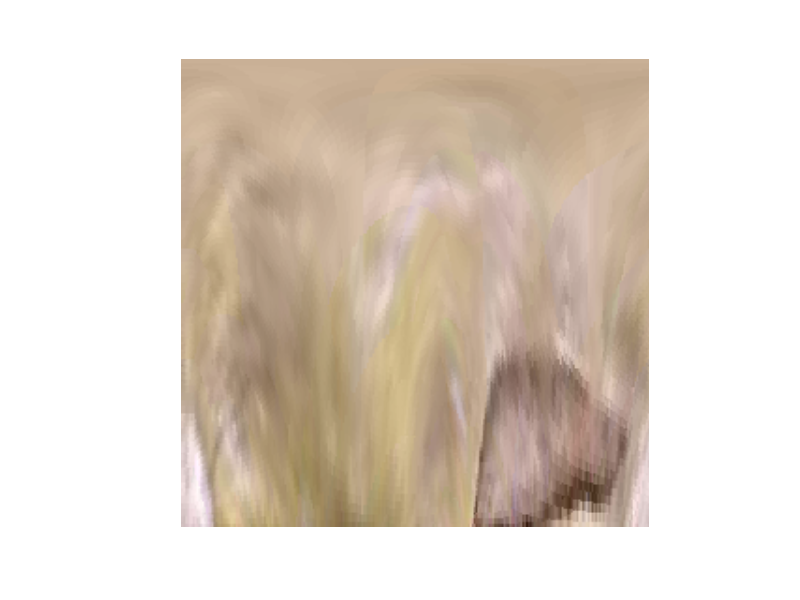}  \\
         & 
    \end{tabular}
    \caption{Original image (a) with its functional components (b \& c). (b) is the associated aligned and centered curve $\mathbf{S}$ and (c) represents the associated texture $\mathbf{T}\circ g$.   }
    \label{text_ap}
\end{figure}
Note that the represented textures are obtained using all the channels of the images. For each image, the operation consists of estimating the texture of each channel. The obtained texture is then a $3$-dimensional array, so a colored texture variable. \par For explanatory purposes, we compute the empirical texture mean for each class, see Figure \ref{txt_mean}. As in Figure~\ref{xx}, it seems that \textit{Kirmizi} and \textit{Siirt} classes of pistachios have a slight difference in their tints. Indeed, Figure \ref{txt_mean} shows that in average, \textit{Siirt} pistachios are lighter than \textit{Kirmizi} ones. This emphasizes the importance to consider textures in the discrimination model.    
\begin{figure}[h]
    \centering
    \begin{tabular}{c c}
         \textit{Kirmizi} &           \textit{Siirt} \\\includegraphics[width=.25\linewidth]{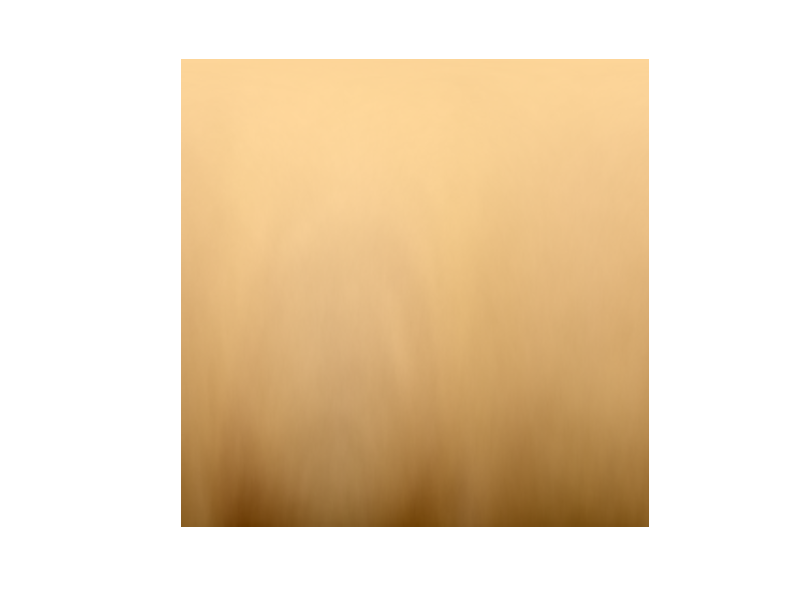}& \includegraphics[width=.25\linewidth]{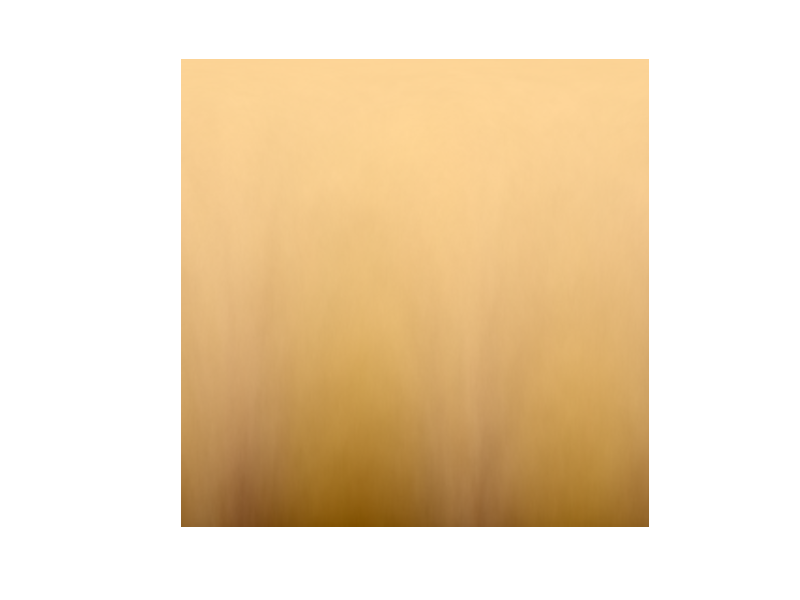} 
    \end{tabular}
    \caption{Empirical texture means by class.}
    \label{txt_mean}
\end{figure}
\subsection{Results}
To evaluate the proposed representation of images, we train linear discriminant analysis methods based on our framework, using MFPCA and MFPLS on the synthetic images. In both, we use $10$ bi-dimensional splines for each channel of textures and $10$ quadratic splines for the observations of $\mathcal{T}_{\boldsymbol{\mu}}(\mathbf{\tilde C})$. Moreover, for comparison purposes, we also estimate these methods using solely the input images as proposed in  \citep{image-happ}. This allows to compare the two models with different frameworks, our and the scalar-on-image regression framework. In this latter setting, the image is seen as three random surfaces (corresponding to each of its channel), and we use $10$ bi-dimensional splines on each channel for reconstructing its functional form.\par 
For each method, the table \ref{res} presents the performance (accuracy and F-1 score) obtained using $10$-cross-validation of our proposed framework (OUR) and the competitor framework (COMP), which sees the image as random surfaces. 

\begin{table}[ht]
\centering
\begin{tabular}{ccccc}
  \hline
 & & \textbf{MFPCA} & \textbf{MFPLS} \\ 
\multirow{ 2}{*}{OUR}& ACC & 0.77 & 0.76 \\ 
  & F1 & 0.74 & 0.73 \\ 
  \hline 
  \multirow{ 2}{*}{COMP}&ACC & 0.63 & 0.57 \\ 
  &F1 & 0.42 & 0.54 \\ 
\end{tabular}
\caption{The results obtained using $10$ cross-validation }
\label{res}
\end{table}

The results show that the proposed representation outperforms the scalar‑on‑image baseline. Indeed, models estimated using the competitor formalism have accuracies of about $60\%$ and F-1 of $50\%$, which is equivalent to a random baseline model predictions. 
\section{Discussion and perspectives}
\label{disc}
This work proposes a new framework for statistical analyses of images. It extends the work of \cite{axe1} by taking into account the color variations in images. For doing so, we introduce and study a mapping of the interior of the main object of study to a more convenient space. This mapping is then applied in a supervised classification framework, where the obtained results demonstrate its appealingness. \par 
Our work proposes a new paradigm for image, by explicitly taking the main aspect of images, and so differs of recent works in statistical image analysis (see e.g \cite{image-2}, \cite{image-happ}); which aim to extend the pixel-based approach to a continuous setting. As the motivation application shows such strategy can be limited when images have substantial deformations, especially when using linear models. Moreover, the proposed framework is a more frugal representation of images: it is a sparse representation of shapes, and only considers the color variation of interest in images. Although, we focused on linear models in the statistical learning, non-linear models can also be explored, for example following the recent works of \cite{wu2024, wang2024}.  \par 
This work heavily relies on the assumption that the object’s interior is a star‑shaped domain. This assumption allows having a mapping from the interior of the object to $[0, 1]^2 \subset \mathbb{R}^2$. A natural extension of this work is the use of more general interior domains, for example by exploring the smallest star-domain which contains the interior, in a convex-hull fashion. However, such extension would necessarily lead to consider the background colors. Future works should focus on this aspect, while studying the incorporation of several objects in the analysis.

\section*{Acknowledgements} \label{ackn}

The author would like to thank Prof.Descary and Prof.Beaulac for their insightfully suggestions and comments.

\bibliographystyle{apalike}
\bibliography{ref}
\newpage 
\appendix
\section{Technical details}
\label{annex}
\begin{proof}[Proof of Proposition \ref{prop1}]
Using the fact that $\textbf{C}(t)\in \mathcal{D}$, $t\in [0, 1]$, we have that \eqref{star} is equivalent to  
\begin{align}
    \{(1-\lambda)\mathbf{c} + \lambda \mathbf{C}(t)\mid \lambda \in\left[0,1\right]\}\subset \mathcal{D} \iff   \{\mathbf{c} + \mathbf{O}\left(\lambda \mathbf{S}(t)\right) \mid \lambda \in\left[0,1\right]\}\subset \mathcal{D}.
    \label{int}
\end{align}
Remark that $v=\lambda e^{i2\pi t}$ is a global parametrization of $\mathbf{D}$, then $$\varphi(v)= \lambda \mathbf{S}(t)= |v| \mathbf{S}\left(\frac{\text{arg}(v)}{2\pi}\right).$$ This concludes the proof.\end{proof}
\begin{proof}[Proof of Proposition \ref{prop2}] \it First, we state the proprieties of $\varphi(v)$: $\varphi(\lambda v)= \lambda \varphi(v),\ $ $\lambda\geq 0$ ;  $\varphi(e^{i\theta})= \mathbf{S}\left(\frac{\theta }{2\pi} \right)$, $\theta\in [0, 2\pi]$, which are derived directly from the definition of $\varphi(v)$.  \par 
Now, we prove Proposition \ref{prop2} by relying on the two following points: 
\begin{enumerate}
    \item \textbf{Existence}: Since $\mathcal{D}$ is a star-shaped domain and $\mathbf{S}(t)$ is the frontier of the domain, $u\in \mathcal{D}$, implies that there exists $\theta^*$ and $\lambda \in (0, 1]$ such as $$
u=\lambda \mathbf{S}\left(\frac{\theta^*}{2\pi}\right) +\mathbf{c}. 
$$
this expression is equivalent to $$
  u=\mathbf{O}\varphi(v)+\mathbf{c}
$$
where $$v=\frac{\norm{u-\mathbf{c}}_2}{\norm{\mathbf{S}\left(\frac{\theta^*}{2\pi}\right) }_2} e^{i\theta^*} \ \text{and }  \alpha(\theta^*)= \arg\left( \begin{pmatrix}
    1 & i
\end{pmatrix} (u-\mathbf{c}) \right). $$
\item \textbf{Uniqueness}: Let assume that there exist $v_1$ and $v_2$, such as 
\begin{align*}
    u&= \mathbf{O}\varphi(v_1)+ \mathbf{c} \\ 
    u&= \mathbf{O}\varphi(v_2)+ \mathbf{c},
\end{align*}
if we note $v_1=\lambda_1 e^{i\theta_1}$ and $v_2=\lambda_2 e^{i\theta_2}$, we then have that $$
\arg( \begin{pmatrix}
    1 & i 
\end{pmatrix} (\mathbf{O}^\top u-\mathbf{c}))= \alpha(\theta_1)=\alpha(\theta_2).  
$$
Since $\alpha(\theta)$ is injective, this implies that $\theta_1=\theta_2$. Then, 
\begin{align*}
    u&=\mathbf{O}\lambda_1 \varphi(e^{i\theta_1})+\mathbf{c}\\ 
    u&=\mathbf{O}\lambda_2 \varphi(e^{i\theta_1})+\mathbf{c}
\end{align*}
by identification, we have that $\lambda_1=\lambda_2$. Finally, $v_1=v_2$. 
\end{enumerate}
\end{proof}
\end{document}